\newtheorem{theorem}{Theorem}
\newtheorem{lemma}[theorem]{Lemma}
\newenvironment{sketchofproof}{%
  \proof}{\endproof}
\newcommand{\BibTeX}{B\kern-.05em{\sc i\kern-.025em b}\kern-.08em\TeX}
\begin{document}


\begin{frontmatter}


\paperid{7430} 


\title{Multi-Agent Path Finding For Large Agents Is Intractable}


\author[1]{\fnms{Artem}~\snm{Agafonov}}
\author[2,1,3]{\fnms{Konstantin}~\snm{Yakovlev}}
\address{$^1$HSE University,~$^2$FRC CSC RAS,~$^3$AIRI}


\begin{abstract}
The multi-agent path finding (MAPF) problem asks to find a set of paths on a graph such that when synchronously following these paths the agents never encounter a conflict. In the most widespread MAPF formulation, the so-called Classical MAPF, the agents sizes are neglected and two types of conflicts are considered: occupying the same vertex or using the same edge at the same time step. Meanwhile in numerous practical applications, e.g. in robotics, taking into account the agents’ sizes is vital to ensure that the MAPF solutions can be safely executed. Introducing large agents yields an additional type of conflict arising when one agent follows an edge and its body overlaps with the body of another agent that is actually not using this same edge (e.g. staying still at some distinct vertex of the graph). Until now it was not clear how harder the problem gets when such conflicts are to be considered while planning. Specifically, it was known that Classical MAPF problem on an undirected graph can be solved in polynomial time, however no complete polynomial-time algorithm was presented to solve MAPF with large agents. In this paper we, for the first time, establish that the latter problem is NP-hard and, thus, if P$\neq$NP no polynomial algorithm for it can, unfortunately, be presented. Our proof is based on the prevalent in the field technique of reducing the seminal 3SAT problem (which is known to be an NP-complete problem) to the problem at hand. In particular, for an arbitrary 3SAT formula we procedurally construct a dedicated graph with specific start and goal vertices and show that the given 3SAT formula is satisfiable iff the corresponding path finding instance has a solution.

\end{abstract}

\end{frontmatter}


\section{Introduction}
Multi-agent path finding (MAPF)~\cite{stern2019multi} generalizes the task of finding a single path in a graph to the case of finding multiple paths with certain restrictions. In the basic MAPF formulation, the so-called Classical MAPF, we are given a graph and $n$ start and goal vertices, one for each agent. The time is discretized and at each time step an agent can either stay at its current vertex or move to an adjacent one following an edge. Thus the individual path for an agent may contain consecutive vertices, meaning that this agent is waiting for certain amount of time steps in them. The waiting actions are crucial in MAPF because they are often needed to eliminate \emph{conflicts}. In particular, two paths are said to contain an edge (vertex) conflict if they use the same edge (end up in the same vertex) at the same time step. The MAPF problem asks to find $n$ paths (one for each agent), s.t. any pair of them is conflict-free. 

\begin{figure}[t]
\centering
\includegraphics[width=0.9\linewidth]{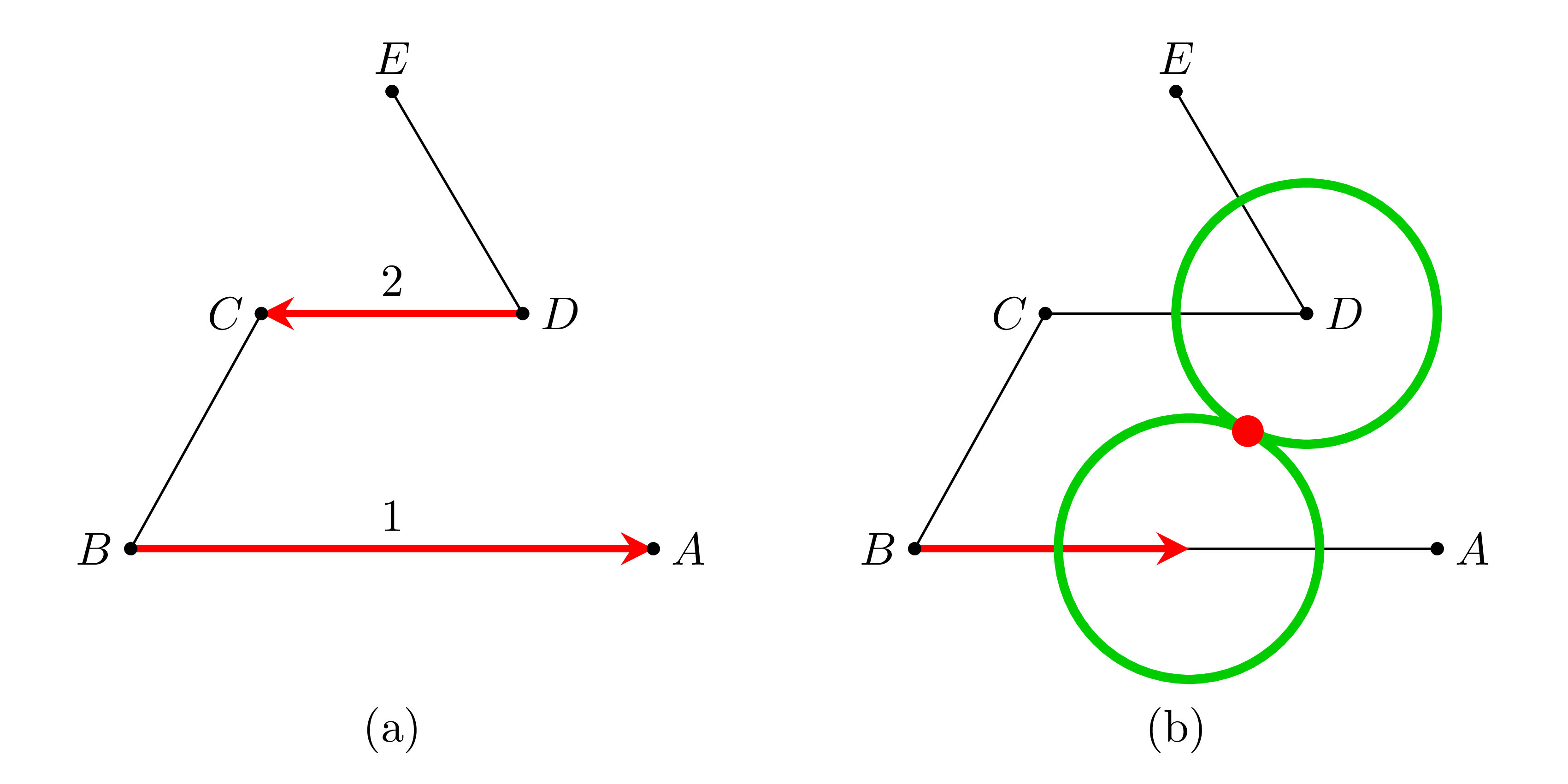}
\caption{MAPF vs. LA-MAPF.}
\label{fig:mapf_vs_la_mapf}
\end{figure}

The cost of a MAPF solution is typically measured as either the sum of costs of the individual solutions (SOC or flowitme) or the maximum over them -- the makespan. Here, the individual cost is the time step when the agent reaches its goal (and never moves away). It is known that solving MAPF optimally w.r.t. SOC or makespan is NP-Hard~\cite{surynek2010optimization}. Meanwhile, a decision variant of MAPF can be solved in polynomial time~\cite{kornhauser1984coordinating} (on an undirected graph) and, indeed, a plethora of fast, subotimal MAPF algorithms exists: Push and Rotate~\cite{de2014push}, PIBT~\cite{okumura2022priority}, LaCAM~\cite{okumura2023lacam} to name a few.

Meanwhile, directly transferring MAPF solutions to the real world is complicated as Classical MAPF relies on a range of simplifying assumptions that are not met in practice, e.g. body-less agents, uniform duration of moves, instantaneous acceleration/deceleration etc. To this end a less restrictive and more practically-oriented MAPF formulations were introduced and studied. One of such prominent formulation is the one introduced in~\cite{li2019multi} and named MAPF for Large Agents, or LA-MAPF for short. In this formulation the graph is naturally embedded into the Euclidean space (workspace) and it is assumed that an agent with body i) occupies a portion of the workspace while residing in the graph vertex; ii) sweeps a certain volume of space when moving along an edge which is explicitly associated with a segment of a straight line in the workspace. Consequently, the number of ways how the agents may get in conflict increases.

Consider for example a problem instance depicted in Fig.~\ref{fig:mapf_vs_la_mapf}. Here the first agent, $a_1$, has to move from $B$ to $A$ and the second one, $a_2$, from $D$ to $C$. In case of MAPF (Fig.~\ref{fig:mapf_vs_la_mapf}-a) the trivial solution: $a_1:B\rightarrow A, a_2: D\rightarrow C$ is collision-free. However the same solution for the case of LA-MAPF may be not valid (Fig.~\ref{fig:mapf_vs_la_mapf}-b). The problem is that no matter whether the disk-shaped agent $a_2$ stays at $D$ or moves from $D$ to $C$ the collision occurs with the same-sized agent $a_1$ traversing $BA$ edge. 
To solve the problem $a_2$ has to move to vertex $E$ first, making enough room for $a_1$ to move from $B$ to $A$. After this move $a_2$ can go back to $D$ and then to $C$.


Intuitively, LA-MAPF is harder to solve than MAPF, but until now the computational complexity of LA-MAPF was not formally established, contrary to MAPF whish is known to be solvable in polynomial time (on undirected graphs) since 1984~\cite{kornhauser1984coordinating}. In this work we fill this gap, by proving that the decision variant of LA-MAPF (i.e. giving YES/NO answer to the question ``if a solution for a LA-MAPF exists'') is NP-Hard. This implies that, if P $\neq$ NP, then there is no complete polynomial algorithm for solving LA-MAPF.



\section{Related works}
\label{section:related_works}

Two lines of research are especially relevant for our work: extending MAPF to handle agents that are not body-less (i.e. have shape and size) and analyzing the computational complexity of MAPF.



\paragraph{Extending MAPF To Handle Large Agents} It is known that Classical MAPF (with point agents) is solvable in polynomial time and one of the widespread polynomial MAPF algorithm with strong theoretical guarantees is Push and Rotate~\cite{de2014push}. In~\cite{dergachev2022towards} an adaptation of this algorithm for LA-MAPF was proposed that keeps the core logic of Push and Rotate but adds dedicated moves tailored to clear the way for the large agents in case of obstruction. The resultant algorithm runs in polynomial time, but, unfortunately, it is not complete, i.e. may fail to find a solution for a solvable LA-MAPF instance. 



Another widespread approach to solve Classical MAPF is the Conflict-Based Search (CBS)~\cite{sharon2015conflict}. This solver is originally tailored to obtain optimal MAPF solutions and the expected running time of the algorithm is estimated by an exponential function~\cite{gordon2021revisiting}, which limits its applicability in complex scenarios involving large number of agents. Moreover CBS may not correctly terminate on an usolvable MAPF instance. Still, it is widely used in MAPF community and adapted to various modifications of the problem.

The most relevant CBS adaptation to the problem we are interested in is presented in~\cite{li2019multi}, where CBS was specifically modified to handle large agents. Specifically, several new types of CBS constraints were introduced that increase the pruning power of CBS and positively influence its runtime.

In~\cite{yao2024layered} another adaptation of CBS to LA-MAPF problem is introduced. The key idea behind this method is to decompose the problem into smaller subproblems with a smaller number of agents. It does not compromise the guarantees and allows to find solutions faster in practice. Still, the complexity of this method remains exponential.

Overall, several adaptations of seminal MAPF methods to LA-MAPF setting are known but, none of such algorithms is both complete and runs in polynomial time. Thus, it is still not clear whether LA-MAPF can be solved in polynomial time (on an undirected graph) like regular MAPF, or not. 



\paragraph{MAPF Complexity}

A range of papers analyzing the computational complexity of different variants of Classical MAPF problems have been published so far. The key results are summarized in Table~\ref{table:complexity_analysis}. These works consider both variants of the problem: decision and optimization, ans they also investigate this problem under different assumptions about the graph, such as edge orientation and graph layout. P and NP results have been obtained, but all of this papers explore the problem under the assumption that agents do not have a geometric shape. Therefore, these papers do not directly answer our question, but they leave room for further research on the LA-MAPF problem.

\begin{table}[h]
\caption{Relevant results of complexity analysis for the MAPF problem.}~\label{table:complexity_analysis}
\centering
\begin{tabular}{p{0.2\linewidth}p{0.2\linewidth}p{0.2\linewidth}p{0.2\linewidth}} 
\toprule
\bf{Publication} & \bf{Graph} & \bf{Task type} & \bf{Complexity} \\
\midrule
Kornhauser. 1984~\cite{kornhauser1984coordinating} & Undirected & Decision & P \\
\midrule
Nebel. 2020~\cite{nebel2020computational} & Directed & Decision & NP-complete \\
\midrule
Nebel. 2024~\cite{nebel2024computational} & Directed, strongly-connected & Decision & P \\
\midrule
Yu. 2015~\cite{yu2015intractability} & Planar & Optimization & NP-complete \\
\midrule
Banfi et al. 2017~\cite{banfi2017intractability} & 2D grid & Optimization & NP-hard \\
\midrule
Tan et al. 2023~\cite{tan2023intractability} & Directed acyclic graph & Optimization & NP-complete \\
\bottomrule
\end{tabular}
\end{table}


\section{Problem Statement}
\label{section:problem_definition}



Let $G = (V, E)$ be an undirected graph that is embedded into the Euclidean plane $\mathbb{R}^2$. That is, each vertex $v \in V$ corresponds to a point on a plane with the coordinates $(v_x, v_y)$ and each edge $e=(u,v) \in E$ corresponds to a segment of a straight line connecting $(u_x, u_y)$ to $(v_x,v_y)$.

Let $\mathcal{A} = \{a_1, a_2, \ldots, a_k\}$ be a set of $k$ agents. Each agent is modeled as a disk of radius $r=const>0$, the agent's reference point is the center of the disk. We say that an agent is located at vertex $v \in V$ if its reference point has coordinates $(v_x, v_y)$. We say that an agent is moving along the edge $(u, v) \in E$ if its reference point is moving along the straight line segment connecting $(u_x, u_y)$ and $(v_x, v_y)$.

The time is discretised into the time steps, i.e. $T=[0, 1, ...)$. At each time stamp $t$ the joint configuration of the agents is an ordered sequence of distinct graph vertices, specifying where the agents are located: $\Pi_t=(v_1,...v_k)$. The location of the $i$th agent is denoted as $\Pi_t(a_i)=v_i$. Consider now two consecutive configurations  $\Pi_t$ and $\Pi_{t+1}$. These configurations are said to form a transition iff:
\begin{enumerate}
    \item $\exists! i\in[1,k]:~(\Pi_t(a_i),\Pi_{t+1}(a_i))\in E$
    \item $\forall j\neq i: ~\Pi_t(a_j)=\Pi_{t+1}(a_j)$
\end{enumerate}

In other words a transition occurs when exactly one agent moves along an edge while the other stay put.

Consider now a sequence $\Pi=(\Pi_1, \ \Pi_2, \ \ldots, \ \Pi_L)$. It is said to contain a vertex conflict, $\left<a_i, a_j, u, v, t\right>$, if $\exists i,j: \| \Pi_t(a_i),\Pi_t(a_j)\| < 2\cdot r$. I.e., the vertex conflict occurs when the agents bodies overlap. Similarly, the edge conflict is defined as $\left<a_i, a_j, u, v, v', t\right>$. Here, $a_i$ is the agent that changes its location from $u = \Pi_{t}(a_i)$ to $v = \Pi_{t+1}(a_i)$ and $a_j$ is the agent residing at $v'$ at $\Pi_{t}$ and $\Pi_{t+1}$, and the conflict occurs if the distance between the point $(v'_x, v'_y)$ and the segment connecting $(u_{x}, u_{y})$ and $(v_{x}, v_{y})$ is less than $2r$.

The LA-MAPF problem now is: given the start and the goal configurations find a sequence of transitions connecting them, s.t. it is conflict-free. Please note, that in this work we interested in getting any solution of LA-MAPF, rather than getting a solution that minimizes some cost objective (e.g. the number of transitions).

\section{Reduction Of 3-SAT To LA-MAPF}
\label{section:reduction}

Our ultimate goal is to demonstrate that LA-MAPF is an NP-Hard problem. We do this by following an established approach of reducing the seminal NP-complete problem, i.e. 3-SAT, to the problem at hand, i.e. LA-MAPF.

3-SAT is a problem that for a (special type of) Boolean formula asks whether there exists an assignment of the variables that renders this formula True. Specifically, the input formula is assumed to be a 3CNF, i.e. to be a conjunction of $m \geq 1$ clauses where each clause is the disjunction of exactly three literals: a variable or its negation. The total number of variables in the formula is denoted as $n \geq 1$.


Our reduction is a careful demonstration that for any 3-SAT instance, i.e. for arbitrary Boolean 3CNF formula, one can construct a graph and define the start and goal locations of the agents (as well as their radius), such that there exists a solution to the resulting LA-MAPF problem iff there is a satisfying assignment for the original 3-SAT formula. Additionally, we ensure that the size of the graph and the number of agents are polynomial in the number of variables and clauses of the initial 3-SAT instance.



The agents' radius in the reduction is set to be equal to the number of clauses in the original problem, i.e. $r = m$.

The reduction itself is made up of the three stages, each adding the so-called \textit{gadgets} -- subgraphs (that may contain start and goal locations for the agents) out of which the final LA-MAPF graph is constructed. Each vertex of each gadget has a specific coordinates on a plane. To better visualize them we divide the latter into the several disjoint zones as shown in Fig.~\ref{fig:zones}.

\begin{figure}[h]
\centering
\includegraphics[width=0.8\linewidth]{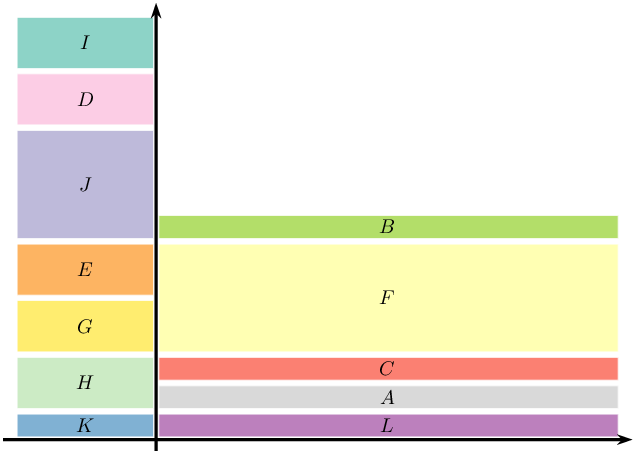}
\caption{The division of the plane for the reduction.}
\label{fig:zones}
\end{figure}

Please note, that for the sake of clarity, the locations of certain vertices in the subsequent figures will not exactly match their actual locations as per provided coordinates.

We will use a 3-SAT instance with $3$ variables and $1$ clause: $x_1 \vee \neg x_2 \vee x_3$, as an illustrative example (however, our reduction is indeed, applicable to any 3-SAT instance).






\subsection{Variables}

The first type of gadget we introduce is composed $3$ vertices ($A$, $B$, $C$) and an agent. The agent's starting and ending vertex is $A$. Even though this agent is already located at the final vertex, we will later show that it will need to move away from and return to this vertex in order to allow other agents to reach their destinations.

We populate this gadget $n$ times, one for each variable in the initial 3-SAT formula. We will refer to the agent associated with the $i$-th variable as the $i$-th v-agent. The vertices in the $i$-th gadget have the following coordinates:

$$\begin{cases}
    A_i = ((2m + 1) \cdot i, \ 1) \\
    B_i = ((2m + 1) \cdot i, \ 4m^2n + 2m^2 + 2mn + 5m - 1) \\
    C_i = ((2m + 1) \cdot i, \ 2m^2 + m)
\end{cases}$$

The gadgets of the first type used in the example are shown in Figure~\ref{fig:variable}. To improve clarity, the vertices of each gadget are not aligned along a vertical line.

\begin{figure}[h]
\centering
\includegraphics[width=0.8\linewidth]{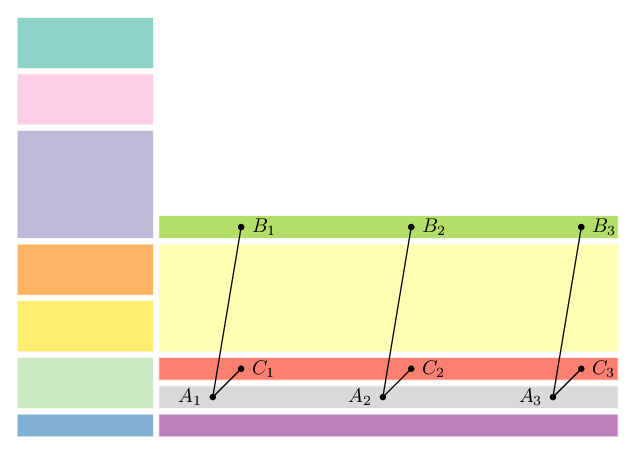}
\caption{The gadgets of first type.}
\label{fig:variable}
\end{figure}

We assume that the $i$-th v-agent, located at the $B_i$ vertex, corresponds to the true value of the $i$-th variable ($x_i = \text{True}$), and at the $C_i$ vertex corresponds to the false value ($x_i = \text{False}$).


\subsection{Clauses}

The second type of gadget has $8$ vertices ($D$, $E$, 3 vertices in the $F$ zone, $G$, $H$, $I$) and an agent. The agent's starting vertex is $D$ and ending vertex is $I$.

We populate this gadget $m$ times, one for each clause in the initial 3-SAT formula. We will refer to the agent associated with the $j$-th clause as the $j$-th c-agent. The vertices in the $j$-th gadget have the following coordinates:
$$\begin{cases}
    D_j = (0, \ (2m + 1) \cdot j + 4m^2n + 2m^2 + 4mn + 3m + n - 2) \\
    E_j = (0, \ j + 2m^2 n + 2m^2 + mn + 3m - 1) \\
    G_j = (0, \ j + 2m^2 n + 2m^2 + mn + 2m - 1) \\
    H_j = (0, \ (2m + 1) \cdot j) \\
    I_j = (0, \ (2m + 1) \cdot j + 4m^2n + 4m^2 + 4mn + 5m + n - 2)
\end{cases}$$
The coordinates of the remaining three vertices are calculated based on the literals in the $j$-th clause. If the $j$-th clause has a positive literal $x_i$, then we add a vertex to the graph with the following coordinates:
$$F_j^{x_i} = ((2m + 1) \cdot i, \ j + 2m^2 + 2m - 1)$$
If the $j$-th clause has a negative literal $\neg x_i$, then we add a vertex to the graph with the following coordinates:
$$F_j^{\neg x_i} = ((2m + 1) \cdot i, \ j + 4m^2n + 2m^2 + 2mn + 3m - 1)$$

The gadget of the second type that is used in the example are shown in Figure~\ref{fig:clause}. To improve clarity, the vertices $D_1$, $E_1$, $G_1$, $H_1$ and $I_1$ are not aligned along a vertical line.

\begin{figure}[h]
\centering
\includegraphics[width=0.8\linewidth]{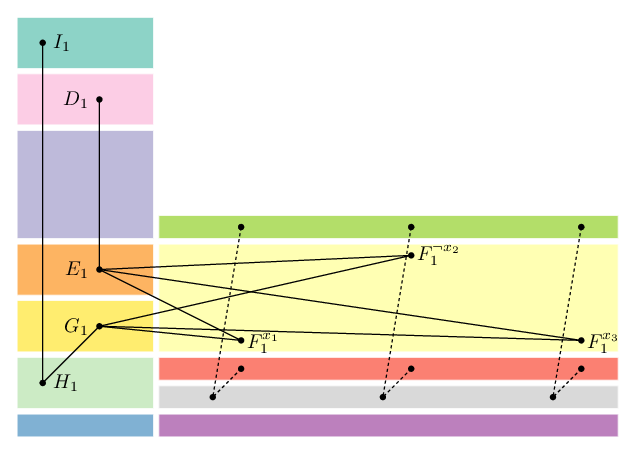}
\caption{The subgraph of the clause-agent.}
\label{fig:clause}
\end{figure}

\begin{lemma}
\label{lemma:variable_blocks_clause}
If there are agents in the $C_i$ and $F_j^{x_i}$ vertices (similarly for the $B_i$ and $F_j^{\neg x_i}$ vertices), then there is a conflict between them.
\end{lemma}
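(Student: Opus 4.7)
The plan is to verify this by direct coordinate computation, since a vertex conflict between two agents is defined purely in terms of the Euclidean distance between their reference points being strictly less than $2r = 2m$. So the proof reduces to showing that each of the two pairs of vertices in question lies within distance $2m$ of one another.

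First I would observe that, by construction, both $C_i$ and $F_j^{x_i}$ share the same $x$-coordinate, namely $(2m+1)\cdot i$, and likewise $B_i$ and $F_j^{\neg x_i}$ share the same $x$-coordinate $(2m+1)\cdot i$. Consequently, in each case the Euclidean distance between the two points reduces to the absolute value of the difference of their $y$-coordinates.

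Next, I would plug in the explicit $y$-coordinates from the definitions of the gadgets. For the first pair,
\[
\bigl| y_{F_j^{x_i}} - y_{C_i} \bigr| \;=\; \bigl| (j + 2m^2 + 2m - 1) - (2m^2 + m) \bigr| \;=\; j + m - 1,
\]
which, since $1 \le j \le m$ and $m\ge 1$, satisfies $j + m - 1 \le 2m - 1 < 2m = 2r$. For the second pair,
\[
\bigl| y_{F_j^{\neg x_i}} - y_{B_i} \bigr| \;=\; \bigl| (j + 4m^2n + 2m^2 + 2mn + 3m - 1) - (4m^2n + 2m^2 + 2mn + 5m - 1) \bigr| \;=\; 2m - j,
\]
which, since $1 \le j \le m$, satisfies $2m - j \le 2m - 1 < 2m = 2r$.

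In both cases the distance between the reference points of the two agents is strictly less than $2r$, which by the definition of a vertex conflict given in Section~\ref{section:problem_definition} means that the two agents' disks overlap and therefore conflict. I do not anticipate a real obstacle here: the lemma is essentially a sanity check that the rather baroque-looking $y$-coordinates assigned to the $B$, $C$, and $F$ vertices were chosen precisely so that the vertical spacing between any $C_i$--$F_j^{x_i}$ or $B_i$--$F_j^{\neg x_i}$ pair falls just inside the conflict radius $2m$. The only thing one needs to watch for is the edge cases $j=1$ and $j=m$, both of which are covered by the bound $j+m-1 \le 2m-1$ and $2m-j \le 2m-1$ above.
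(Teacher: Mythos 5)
Your proposal is correct and follows exactly the same route as the paper's proof: reduce the Euclidean distance to a difference of $y$-coordinates (since the $x$-coordinates coincide), compute $j+m-1$ and $2m-j$ respectively, and bound both by $2m-1<2r$. Your version is slightly more explicit than the paper's in spelling out why $j+m-1<2m$ and $2m-j<2m$ using $1\le j\le m$, but the substance is identical.
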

\begin{proof}
The distance between the $C_i$ and $F_j^{x_i}$ vertices is:
$$\sqrt{(x_{F_j^{x_i}} - x_{C_i})^2 + (y_{F_j^{x_i}} - y_{C_i})^2} =$$
$$= \sqrt{(((2m + 1) \cdot i) - ((2m + 1) \cdot i))^2 +  (y_{F_j^{x_i}} - y_{C_i})^2} =$$
$$= |(j + 2m^2 + 2m - 1) - (2m^2 + m)| =$$
$$= j + m - 1 < 2r$$
Similarly, for the $B_i$ and $F_j^{\neg x_i}$ vertices:
$$\sqrt{(x_{B_i} - x_{F_j^{\neg x_i}})^2 + (y_{B_i} - y_{F_j^{\neg x_i}})^2} =$$
$$= \sqrt{(((2m + 1) \cdot i) - ((2m + 1) \cdot i))^2 + (y_{B_i} - y_{F_j^{\neg x_i}})^2} =$$
$$= |y_{B_i} - y_{F_j^{\neg x_i}}| = 2m - j < 2r$$
Since the distance between the vertices is less than $2r$, the simultaneous presence of agents at these vertices will lead to a conflict between them.
\end{proof}

Lemma~\ref{lemma:variable_blocks_clause} states that if the $i$-th v-agent is located at the $C_i$ vertex, then the $j$-th c-agent cannot move to the $F_j^{x_i}$ vertex, as this would cause a collision between the two agents. A similar property holds for the vertices $B_i$ and $F_j^{\neg x_i}$. This restriction creates a connection between the two problems. An inappropriate assignment of variables leads to the unsatisfiability of a clause. Similarly, an inappropriate placement of v-agents may prevent a c-agent from reaching its destination.


\subsection{Blocking}

The final gadget we discuss has $2n + 1$ vertices ($J_1$-$J_n$, $K$, $L_1$-$L_n$) and $n + 1$ agents. The starting and ending vertices for the agents are $J_1$-$J_n$ and $K$. Although these agents are already at their final vertices, they will need to move away from and return to these vertices in order for c-agents to reach their destinations. To accomplish this, they must force v-agents to leave their own final vertices, as we mentioned earlier.

We will refer to the agent in this gadget as the b-agent. The vertices in this gadget have the following coordinates:

$$\begin{cases}
    J_i = (0, \ (2m + 1) \cdot i + 4m^2n + 2m^2 + 2mn + 3m - 2) \\
    K = (0, \ 0) \\
    L_i = ((2m + 1) \cdot i, \ 0)
\end{cases}$$

The gadget of the third type used in the example is shown in Figure~\ref{fig:block}.

\begin{figure}[h]
\centering
\includegraphics[width=0.8\linewidth]{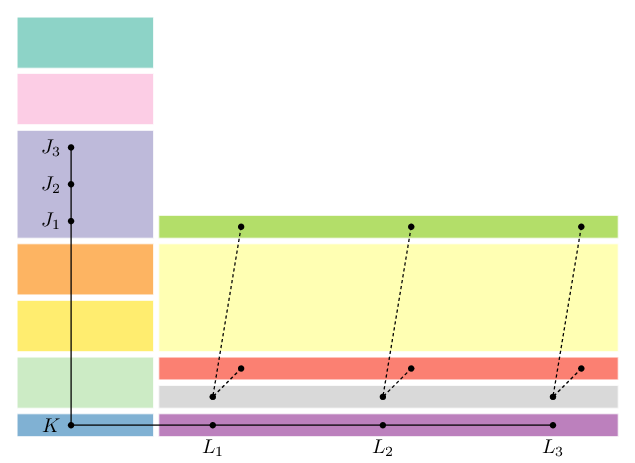}
\caption{The subgraph of the block-agents.}
\label{fig:block}
\end{figure}

\begin{lemma}
\label{lemma:block_blocks_variable}
If there are agents in the $A_i$ and $L_i$ vertices, then there is a conflict between them.
\end{lemma}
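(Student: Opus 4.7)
The plan is to mirror exactly the distance calculation used in the proof of Lemma~\ref{lemma:variable_blocks_clause}: write down the coordinates of $A_i$ and $L_i$, compute the Euclidean distance between them, and observe that the result is strictly less than $2r$, which is the collision threshold specified in the problem statement.

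First I would substitute the definitions from the variable and blocking gadgets: $A_i = ((2m+1)\cdot i,\ 1)$ and $L_i = ((2m+1)\cdot i,\ 0)$. The $x$-coordinates are identical by design (both gadgets place their respective vertices on the vertical line $x=(2m+1)\cdot i$), so the squared horizontal displacement vanishes and the Euclidean distance collapses to $|1 - 0| = 1$. Next I would invoke the reduction's global choice of radius $r = m$ and note that the 3-SAT instance is nontrivial, so $m \geq 1$, which gives $2r = 2m \geq 2 > 1$. Therefore the distance between $A_i$ and $L_i$ is strictly less than $2r$, and by the definition of a vertex conflict given in Section~\ref{section:problem_definition}, two agents simultaneously residing at these vertices must be in conflict.

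There is essentially no obstacle here: the claim is a direct coordinate check, entirely parallel to Lemma~\ref{lemma:variable_blocks_clause}. The only thing worth being careful about is making the implicit assumption $m \geq 1$ explicit (it follows from the 3-SAT input having at least one clause, as stipulated at the start of Section~\ref{section:reduction}), since otherwise the inequality $1 < 2m$ would not be available. Beyond that, the proof is a two-line display of the square-root simplification followed by the comparison $1 < 2m = 2r$, after which the conclusion about the conflict follows from the problem definition.
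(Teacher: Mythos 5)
Your proposal is correct and follows essentially the same route as the paper: substitute the coordinates of $A_i$ and $L_i$, note that the $x$-coordinates coincide so the distance reduces to $|1-0|=1$, and conclude $1 < 2r$ since $r=m\geq 1$. Your explicit remark that $m\geq 1$ is needed for the inequality is a small but worthwhile addition that the paper leaves implicit.
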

\begin{proof}
The distance between the $A_i$ and $L_i$ vertices is:
$$\sqrt{(x_{A_i} - x_{L_i})^2 + (y_{A_i} - y_{L_i})^2} =$$
$$= \sqrt{(((2m + 1) \cdot i) - ((2m + 1) \cdot i))^2 + (1 - 0)^2} =$$
$$= 1 < 2r$$
Since the distance between the vertices is less than $2r$, the simultaneous presence of agents at these vertices will lead to a conflict between them.
\end{proof}

Lemma~\ref{lemma:block_blocks_variable} states that if the b-agent is located at the $L_i$ vertex, then the $i$-th v-agent cannot move to the $A_i$ vertex, as this would result in a collision between these two agents. This restriction prevents the $i$-th v-agent from moving between $B_i$ and $C_i$ vertices. It other words, the values of all variables remain fixed whenever the b-agents are located at the $L$ vertices.


\section{Correctness of reduction}~\label{section:correctness}

In this section, we will demonstrate the equivalence between the original 3-SAT problem and the LA-MAPF problem that we have constructed. We will prove that if one of the problems has a solution, then so does another problem. The following subsections will present algorithms for constructing the solution to one problem from the given solution to the other.


\subsection{3-SAT to LA-MAPF}

Given a solution to the 3-SAT problem, which is an assignment of variables that makes a formula satisfiable, we can use Algorithm~\ref{alg:3sat_to_lamapf} to create a conflict-free plan that moves all agents to their destinations in the constructed reduction.

\begin{algorithm}
\caption{The construction of a reduction solution based on the 3-SAT solution.}\label{alg:3sat_to_lamapf}
\textbf{Input:} $\{x_i \in \{ True, \ False\}: \ i = 1, \ \ldots, \ n\}$ - Solution of 3-SAT \\
\textbf{Result:} $\{(u_1 \rightarrow v_1), \ \ldots, \ (u_N \rightarrow v_N)\}$ - Solution of LA-MAPF
\begin{algorithmic}[1]
\State Solution $\gets \{\}$
\For{$i \gets 1$ \textbf{to} $n$}
    \If{$x_i$ is True}
        \State Solution.Append($A_i \rightarrow B_i$)
    \Else
        \State Solution.Append($A_i \rightarrow C_i$)
    \EndIf
\EndFor
\State Solution.Append($K \rightarrow L_1 \rightarrow \ldots \rightarrow L_n$)
\For{$i \gets 1$ \textbf{to} $n - 1$}
    \State Solution.Append($J_i \rightarrow \ldots \rightarrow J_1 \rightarrow K \rightarrow L_1 \ldots \rightarrow L_{n - i}$)
\EndFor
\State Solution.Append($J_n \rightarrow \ldots \rightarrow J_1 \rightarrow K$)
\For{$j \gets 1$ \textbf{to} $m$}
    \State Solution.Append($D_j \rightarrow E_j$)
    \For {$(id, \ neg)$ \textbf{in} arguments of $j$-th clause}
        \If{$neg$ is True}
            \State $to \gets F_j^{\neg x_{id}}$
        \Else
            \State $to \gets F_j^{x_{id}}$
        \EndIf
        \If{$x_{id}$ is not $neg$}
            \State Solution.Append($E_j \rightarrow to \rightarrow G_j$)
            \State \textbf{break}
        \EndIf
    \EndFor
    \State Solution.Append($G_j \rightarrow H_j$)
\EndFor
\For{$j \gets m$ \textbf{to} $1$}
    \State Solution.Append($H_j \rightarrow I_j$)
\EndFor
\State Solution.Append($K \rightarrow J_1 \rightarrow \ldots \rightarrow J_n$)
\For{$i \gets n-1$ \textbf{to} $1$}
    \State Solution.Append($L_{n - i} \rightarrow \ldots \rightarrow L_1 \rightarrow K \rightarrow J_1 \ldots \rightarrow J_i$)
\EndFor
\State Solution.Append($L_n \rightarrow \ldots \rightarrow L_1 \rightarrow K$)
\For{$i \gets n$ \textbf{to} $1$}
    \If{$x_i$ is True}
        \State Solution.Append($B_i \rightarrow A_i$)
    \Else
        \State Solution.Append($C_i \rightarrow A_i$)
    \EndIf
\EndFor
\State \Return Solution
\end{algorithmic}
\end{algorithm}

The solution constructed by Algorithm~\ref{alg:3sat_to_lamapf} consists of six steps. Figure~\ref{fig:3sat_to_lamapf} illustrates these steps using the example of a 3-SAT formula with $3$ variables $x_1 = x_2 = True$, $x_3 = False$, and $1$ clause: $x_1 \vee \neg x_2 \vee x_3$.

\begin{figure}[h]
\centering
\includegraphics[width=0.8\linewidth]{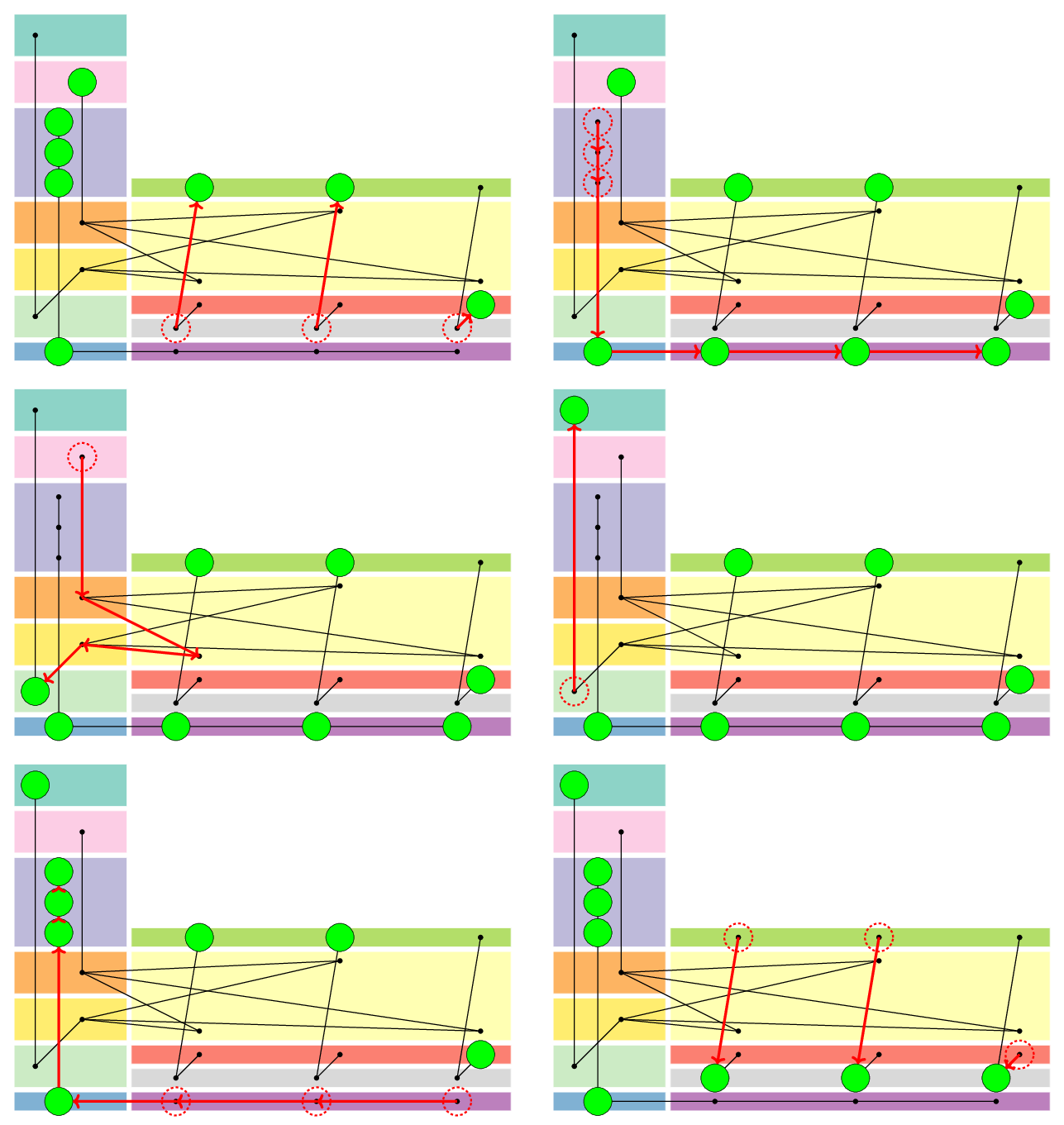}
\caption{The steps of the solution constructed by Algorithm~\ref{alg:3sat_to_lamapf}.}
\label{fig:3sat_to_lamapf}
\end{figure}

In lines 2-8, the v-agents are moved to the vertices that correspond to the given solution. This means that if the $i$-th variable has a true value, then the $i$-th v-agent is moved to the $B_i$ vertex, and to the $C_i$ vertex otherwise. In the example, the v-agents are moved to vertices $B_1$, $B_2$ and $C_3$, respectively.

In lines 9-13, the b-agents are moved to the $L$ vertices. This action empties $J$ vertices and makes it possible to move the c-agents.

In lines 14-28, the c-agents are moved from the $D$ vertex to the $H$ vertex. The cycle in line 16 chooses the literal of the clause that makes it satisfied. In line 23, the c-agent is moved through the selected vertex. In the example, the c-agent moves through vertex $F_1^{x_1}$, as the true value of the first variable satisfies this clause.

In lines 29-31, the c-agents are moved to their respective destinations.

Finally, in lines 32-43, the actions taken in lines 2-13 are reversed. This action returns the b-agents and v-agents back to their destinations.

Lemma~\ref{lemma:3sat_to_lamapf} proves that Algorithm~\ref{alg:3sat_to_lamapf} constructs conflict-free paths.

\begin{lemma}
\label{lemma:3sat_to_lamapf}
The Algorithm~\ref{alg:3sat_to_lamapf} provides a correct solution to the constructed reduction based on the given 3-SAT solution.
\end{lemma}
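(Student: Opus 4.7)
The plan is to verify, phase by phase, that the six-stage construction in Algorithm~\ref{alg:3sat_to_lamapf} yields no vertex or edge conflicts. I would group the lines into: (1) lines 2--8, v-agents leave $A_i$ for $B_i$ or $C_i$; (2) lines 9--13, b-agents evacuate from the $J$-column to the $L$-row via $K$; (3) lines 14--28, each c-agent traverses $D_j \to E_j \to F_j^{\ell} \to G_j \to H_j$ using a literal $\ell$ satisfying the clause; (4) lines 29--31, each c-agent moves $H_j \to I_j$; (5)--(6) lines 32--43, which simply reverse (2) and (1). Because the notion of conflict defined in Section~\ref{section:problem_definition} is time-symmetric, once stages (1)--(4) are verified, (5) and (6) follow immediately, so the argument reduces to checking the first four.

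First I would record once and for all the geometric separations forced by the coordinates. Different v- or b-gadget columns live on the vertical lines $x = (2m+1)i$, whose pairwise horizontal distance is at least $2m+1 > 2r = 2m$; hence any two agents that sit on, or sweep edges within, distinct columns are automatically conflict-free. On the shared column $x=0$ (where $D_j, E_j, G_j, H_j, I_j, J_i, K$ live) and on the shared row $y=0$ (where $K, L_1, \ldots, L_n$ live) I would compute the vertical, resp.\ horizontal, gaps directly from the formulas and observe that every pair of resting positions used by the algorithm is separated by at least $2r$, with the only exceptions being exactly those pairings that Lemma~\ref{lemma:variable_blocks_clause} and Lemma~\ref{lemma:block_blocks_variable} flag.

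With these separations in hand, the phase-by-phase verification is as follows. Phase~(1) is trivial since the v-agents move on pairwise-disjoint columns. In Phase~(2) each b-agent slides down $x=0$ and then rightward along $y=0$; one must check that the sweep along $x=0$ stays $>2r$ from every resting v-agent at $B_i$ or $C_i$ (true because $|x_{B_i}|=|x_{C_i}|=(2m+1)i \geq 2m+1$) and from every c-agent still at $D_j$ (true by the vertical separation of $J$ from $D$), and that the sweep along $y=0$ stays $>2r$ from all resting v-agents (whose $y$-coordinates exceed $1$ by a large margin). In Phase~(3) the critical step is the move $E_j \to F_j^{\ell} \to G_j$: the contrapositive of Lemma~\ref{lemma:variable_blocks_clause} says that $F_j^{\ell}$ is free precisely when the v-agent corresponding to $\ell$ is on the \emph{opposite} side of its gadget, which is exactly what the satisfying assignment guarantees in line~21. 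The remaining sweeps within Phase~(3) stay on $x=0$ or climb a single v-column and are handled by the already-tabulated vertical separations; different c-agents live at distinct $y$-offsets differing by at least $1$, but crucially they are never simultaneously off the column $x=0$ at the same height, so no pair of c-agents collides. Phase~(4) is a single vertical sweep from $H_j$ to $I_j$ along $x=0$, and the $y$-coordinates of $I_j$ and $H_j$ were designed so that the swept segment clears every resting agent by more than $2r$.

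The main obstacle will be the bookkeeping for edge conflicts in Phases~(2) and~(3), where agents sweep long segments along $x=0$ past many resting vertices. Concretely, I expect the delicate bound to be: a b-agent sweeping $J_i \to \cdots \to J_1 \to K$ must avoid the entire vertical stack $\{E_j, G_j, H_j\}_{j=1}^m$, so the coordinate formulas must be unpacked to show that the intervals $[y_{H_1},y_{I_m}]$, $[y_{K},y_{J_n}]$, $[y_{E_m},y_{D_1}]$ are laid out in the correct order with gaps exceeding $2r$; this is a finite but tedious case analysis. Once these separations are verified and Lemma~\ref{lemma:variable_blocks_clause} is invoked for the $F$-transits, conflict-freeness of the full schedule follows, proving the lemma.
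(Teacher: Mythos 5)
Your decomposition into six phases, the use of reverse-symmetry to dispose of the last two, and the invocation of Lemma~\ref{lemma:variable_blocks_clause} for the $E_j \to F_j^{\ell} \to G_j$ transit all match the paper's own (sketch-level) proof, and in the column-separation bookkeeping you are actually more explicit than the paper. However, two of your affirmative claims are false as stated, and both are rescued only by the \emph{ordering} of moves in Algorithm~\ref{alg:3sat_to_lamapf}, which your purely geometric "tabulate all pairwise separations of resting positions" framework does not capture. First, it is not true that every pair of resting positions used by the algorithm is separated by at least $2r$ except for the pairs flagged by Lemmas~\ref{lemma:variable_blocks_clause} and~\ref{lemma:block_blocks_variable}: the vertices $E_j$ and $E_{j'}$ (likewise $G_j$ and $G_{j'}$, and $F_j^{x_i}$ and $F_{j'}^{x_i}$ on a common column) differ in position by only $|j-j'| \le m-1 < 2m = 2r$. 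Your stated reason for c-agent/c-agent safety --- that they are "never simultaneously off the column $x=0$ at the same height" --- addresses only the $F$-zone and says nothing about two c-agents parked at, say, $E_1$ and $E_2$, which would conflict. What actually prevents this is that lines 14--28 serialize the c-agents: agent $j$ is driven all the way from $D_j$ to $H_j$ before agent $j+1$ leaves $D_{j+1}$, so at most one c-agent ever occupies the tightly packed $E$/$F$/$G$ band, while the others rest at $D$ or $H$ vertices, which \emph{are} spaced $2m+1 > 2r$ apart.

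Second, your Phase~(4) claim that the segment from $H_j$ to $I_j$ "clears every resting agent by more than $2r$" fails in forward order: since $y_{H_{j'}} = (2m+1)j'$ lies strictly inside the interval $[y_{H_j},\, y_{I_j}]$ for every $j' > j$, the sweep of c-agent $j$ would pass at distance $0$ over any c-agent still parked at $H_{j'}$. The loop in lines 29--31 runs $j$ from $m$ down to $1$ precisely so that, when agent $j$ sweeps upward, all higher-indexed agents have already moved to $I_{j'}$, which sits $\,(2m+1)(j'-j) \ge 2m+1 > 2r$ above $I_j$. Neither gap requires a new idea --- both are closed by citing the order in which the algorithm schedules the moves --- but as written your argument would not go through, so these two ordering observations must be added.
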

\begin{sketchofproof}
In lines 2-8, the moves are made only by the v-agents. Therefore, there are no agents in the $F$ and $L$ zones. This allows all v-agents to move freely to the vertices in the $B$ and $C$ zones. After that, the $A$ zone becomes free of agents.

In lines 9-13, the moves are made by the b-agents. Since there are no agents in the $A$, $E$, $G$, and $H$ zones, they are able to move to the vertices of the $K$ and $L$ zones. After that, the $J$ zone becomes empty.

In lines 14-28, the c-agents are moved through the $D$, $E$, $F$, $G$ and $H$ zones. Note that in line 23, the choice of a conflict-free vertex $F_j^x$ is always possible. This is because in the initial 3-SAT solution, there was a literal that satisfied this clause, so the c-agent will be able to move through $F_j^x$ vertex corresponding to that literal.

The moves in lines 29-31 are conflict-free.

In lines 32-43, the moves of the v-agents and the b-agents are the same as those in lines 2-13, but reversed. Therefore they are also conflict-free.

It means that the algorithm creates a plan for the agents to move towards their destinations without interfering with each other. This proves the correctness of this algorithm.
\end{sketchofproof}


\subsection{LA-MAPF to 3-SAT}

Given a solution to the reduction, which is a plan that moves all agents to their destinations without conflicts, we can use Algorithm~\ref{alg:lamapf_to_3sat} to assign variables in such way that satisfies the original 3-SAT formula.

\begin{algorithm}
\caption{The construction of a 3-SAT solution based on the solution of the reduction.}\label{alg:lamapf_to_3sat}
\textbf{Input:} $\{(u_1 \rightarrow v_1), \ \ldots, \ (u_N \rightarrow v_N)\}$ - Solution of LA-MAPF \\
\textbf{Result:} $\{x_i \in \{ True, \ False\}: \ \ i = 1, \ \ldots, \ n\}$ - Solution of 3-SAT
\begin{algorithmic}[1]
\State State $\gets$ state where all agents are at their destinations
\For{$i \gets N$ \textbf{to} $1$}
    \State State.Move($v_i$, $u_i$)
    \If{State.IsOccupied($G_1$)}
        \State \textbf{break}
    \EndIf
\EndFor
\State Solution $\gets \{\}$
\For{$i \gets 1$ \textbf{to} $n$}
    \If{State.IsOccupied($B_i$)}
        \State Solution $\gets$ Solution $\cup \ \{x_i = True\}$
    \Else
        \State Solution $\gets$ Solution $\cup \ \{x_i = False\}$
    \EndIf
\EndFor
\State \Return Solution
\end{algorithmic}
\end{algorithm}

In the first line, the LA-MAPF state is initialized with the final state, where all agents are in their final positions. In lines 2-7, the agents' moves are reversed until the $G_1$ vertex is occupied. Finally, in lines 9-15, variables are set: if the $B_i$ vertex is occupied, then $x_i$ is set to true, otherwise is set to false.

Lemma~\ref{lemma:lamapf_to_3sat} proves that Algorithm~\ref{alg:lamapf_to_3sat} sets variables in such a way that the initial 3-SAT formula is satisfied.

\begin{lemma}
\label{lemma:lamapf_to_3sat}
The Algorithm~\ref{alg:lamapf_to_3sat} provides a correct solution for the initial 3-SAT problem, based on the given solution to the reduction.
\end{lemma}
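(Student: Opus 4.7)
The plan is to show that the truth assignment extracted by Algorithm~\ref{alg:lamapf_to_3sat} at the stopping configuration $\Pi^{\ast}$ (the last moment, in forward time, at which $G_1$ is occupied) satisfies every clause of the 3-SAT formula. I would first verify the easy part: at $\Pi^{\ast}$ each v-agent sits at either $B_i$ or $C_i$ rather than at $A_i$, because c-agent~$1$ being at $G_1$ already implies that the c-agent traversal phase is underway and so all v-agents must have vacated their start vertices to let c-agent~$1$ get past $E_1$.

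The core of the argument is then that the extracted assignment satisfies every clause. For each clause $j$ I would argue that the $j$-th c-agent must, at some time $t_j$, occupy one of the vertices $F_j^{\ast}$ corresponding to its literals: the clause gadget is designed so that the only way to move from $E_j$ to $G_j$ is through an $F_j^{\ast}$ vertex. By Lemma~\ref{lemma:variable_blocks_clause}, at time $t_j$ the v-agent for the variable appearing in the chosen literal $lit_j$ must sit at the non-blocking vertex, so its position already renders $lit_j$ true. The pivotal claim to prove next is a \emph{freezing} property: whenever any c-agent occupies an $F$ vertex with $x$-coordinate $(2m+1)\cdot i$, v-agent $i$ cannot traverse its edge $A_iB_i$ or $A_iC_i$, because such a traversal sweeps out the vertical segment at $x=(2m+1)\cdot i$ on which the $F$ vertex lies, and at radius $r=m$ that sweep would conflict with the c-agent. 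Coupling this with Lemma~\ref{lemma:block_blocks_variable} to show that v-agent motion is also blocked whenever the b-agents reside at the $L_i$ vertices, I would conclude that the v-agent configuration at every $t_j$ coincides with the configuration at $\Pi^{\ast}$, so the assignment read off at $\Pi^{\ast}$ simultaneously satisfies all $m$ clauses.

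The main obstacle I expect is ruling out configurations in which some v-agent $i$ slips through $A_i$ during a ``quiet interval'' in which no c-agent happens to be at an $F$ vertex. This requires a careful global accounting showing that between the first $F$-traversal and $\Pi^{\ast}$ the b-agents remain pinned at the $L_i$ vertices (so that Lemma~\ref{lemma:block_blocks_variable} applies continuously), and that $\Pi^{\ast}$ itself is reached while the v-agent assignment is still in its locked state. A secondary subtlety is to justify that $\Pi^{\ast}$ is a valid witness moment for every clause and not only for clause~$1$: this reduces, once the freezing property is in place, to the observation that all c-agents are forced to share the same frozen v-agent background during their respective $F$-traversals. Together, these pieces yield the desired implication that a solution to the constructed LA-MAPF instance produces a satisfying assignment for the original formula.
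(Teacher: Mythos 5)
Your overall strategy matches the paper's: read the assignment off the configuration in which $G_1$ is occupied, use Lemma~\ref{lemma:variable_blocks_clause} to turn each forced $E_j\to F_j^{*}\to G_j$ traversal into a satisfied literal, and invoke a freezing argument so that every traversal is validated against the same v-agent placement. However, the two load-bearing steps are respectively mis-justified and left unproved. First, your reason that the v-agents must have vacated the $A_i$ vertices (``to let c-agent~1 get past $E_1$'') is geometrically false: $A_i=((2m+1)i,\,1)$ lies at distance greater than $2r$ from every vertex and every edge on c-agent~1's route $D_1\to E_1\to F_1^{*}\to G_1$ (the nearest point is $F_1^{x_i}$, at distance $2m^2+2m-1>2m$), so a v-agent parked at $A_i$ does not obstruct that route at all. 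The paper instead derives this from the b-agents: if a b-agent were still in the $J$ zone at the read-off moment it could never escape, because c-agent~1 at $G_1$ and then at $H_1$ sits on the segment $KJ_1$ and seals the only exit, while c-agent~1's eventual move $H_1\to I_1$ sweeps the whole $x=0$ column past every $J_i$ and would collide with it. Hence all $n+1$ b-agents occupy $K,L_1,\dots,L_n$, and Lemma~\ref{lemma:block_blocks_variable} is what keeps every v-agent off $A_i$.

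Second, the freezing property is the crux of the lemma, and you explicitly defer it (``this requires a careful global accounting\dots'') rather than establish it. Your proposed local mechanism --- a c-agent at an $F$ vertex blocks the edges $A_iB_i$ and $A_iC_i$ --- is also only partially correct ($F_j^{\neg x_i}$ is far more than $2r$ away from the segment $A_iC_i$, so it does not block that edge) and, as you yourself note, says nothing about quiet intervals in which no c-agent occupies an $F$ vertex. The paper closes this gap by working entirely inside the window between c-agent~1's last visit to $G_1$ and its move $H_1\to I_1$: throughout that window c-agent~1 occupies $G_1$ or $H_1$, both of which lie on the segment $KJ_1$, so no b-agent can leave the $K$/$L$ zone, all $L_i$ remain occupied, and the v-agents stay frozen; and the move $H_1\to I_1$ itself certifies that no c-agent still resides at a $D_j$ or $E_j$ vertex, i.e., every c-agent completed its $E_j\to F_j^{*}$ move against this single frozen placement. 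Without an argument of this kind, your plan does not exclude the scenario you flag, in which some clause's $F$-traversal is validated against an earlier, different assignment, so the proof is not yet complete.
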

\begin{proof}
Consider the last time the first c-agent was in the $G_1$ vertex. Because in the end it must be in the $I_1$ vertex, the next move must lead it to the $H_1$ vertex.

Note that there is no b-agent in the $J$ zone. Assume that it is there. Then, it will not be able to move out of the $J$ zone, as the first c-agent will block its movement through the edge between the $J_1$ and $K$ vertices. This means that the first c-agent will not be able to reach the $I_1$ vertex, as this would lead to a collision with the b-agent in the $J$ zone.

This means that all b-agents are located in the $K$ and $L$ zones, and therefore, all v-agents must be located in the $B$ and $C$ zones, since they cannot be located in the $A$ zone according to Lemma~\ref{lemma:block_blocks_variable}.

Consider the next timestamp, when the first c-agent will move to the $I_1$ vertex. Since it has stayed in the $H_1$ vertex, no b-agent can leave the $L$ zone. Consequently, none of the v-agents have changed their position. Also, this move means that no c-agent is in the $D$ or $E$ zones, so every c-agent was able to make a conflict-free move to a vertex in the $F$ zone, based on the current positions of the v-agents.

Assume that there is a clause in the initial 3-SAT problem that is not satisfied by the constructed variable assignment. This implies that if we move the corresponding c-agent to any vertex of its gadget in the $F$ zone, it would lead to a conflict with the corresponding v-agent according to Lemma~\ref{lemma:variable_blocks_clause}. However, this would contradict the last statement in the previous paragraph. Therefore, all clauses must be satisfied, and the algorithm has constructed a correct solution to the initial 3-SAT problem. This proves the correctness of the algorithm.
\end{proof}


\section{On The Intractability Of LA-MAPF Problem}~\label{section:intractability}

We have shown that by using the proposed reduction, we can create a LA-MAPF problem that is solvable iff the initial 3-SAT formula has a solution. This means that the LA-MAPF problem is an NP-hard problem.

\begin{theorem}
\label{theorem:np_hardness}
The LA-MAPF problem is an NP-hard problem.
\end{theorem}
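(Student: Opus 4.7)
The plan is to assemble the three ingredients already prepared in Sections~\ref{section:reduction} and~\ref{section:correctness} into a standard Karp-style reduction argument. First I would recall that 3-SAT is NP-complete, so it suffices to exhibit a polynomial-time many-one reduction from 3-SAT to the decision version of LA-MAPF. The reduction itself is exactly the construction from Section~\ref{section:reduction}: given a 3CNF formula with $n$ variables and $m$ clauses, produce the graph consisting of the $n$ variable gadgets, the $m$ clause gadgets, and the single blocking gadget, together with the prescribed start/goal configurations and the agent radius $r=m$.

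Next I would verify that this transformation runs in polynomial time. The number of vertices is $3n$ from the variable gadgets, $8m$ from the clause gadgets, and $2n+1$ from the blocking gadget, so the graph has $O(n+m)$ vertices and a corresponding polynomial number of agents; the radius $r=m$ and every coordinate appearing in the closed-form expressions for $A_i,B_i,C_i,D_j,E_j,F_j^{\cdot},G_j,H_j,I_j,J_i,K,L_i$ is a polynomial in $n$ and $m$, so each coordinate has bit-length $O(\log(nm))$ and can be written down in polynomial time. Hence the entire LA-MAPF instance is produced in time polynomial in the size of the input 3-SAT formula.

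Then I would discharge the correctness of the reduction by citing the two directions already proved. Lemma~\ref{lemma:3sat_to_lamapf} shows that any satisfying assignment of the 3-SAT formula can be lifted, via Algorithm~\ref{alg:3sat_to_lamapf}, to a conflict-free plan reaching every agent's goal, so satisfiability of the formula implies solvability of the LA-MAPF instance. Lemma~\ref{lemma:lamapf_to_3sat} shows that any conflict-free solution plan can be reverse-simulated by Algorithm~\ref{alg:lamapf_to_3sat} to read off an assignment of the variables that satisfies every clause, so solvability of the LA-MAPF instance implies satisfiability of the formula. Together these give the desired iff, and combining it with the polynomial-time construction above yields the polynomial-time many-one reduction $\text{3-SAT} \leq_p \text{LA-MAPF}$. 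Since 3-SAT is NP-hard, LA-MAPF is NP-hard as claimed.

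The only step that requires genuine care is bookkeeping the size of the produced instance: the coordinates grow like $\Theta(m^2 n)$, so one must be explicit that they are written in binary (or even unary, since they are polynomially bounded) rather than treated as unit-sized tokens, to certify that the reduction is truly polynomial. Everything else is immediate from Lemmas~\ref{lemma:variable_blocks_clause}, \ref{lemma:block_blocks_variable}, \ref{lemma:3sat_to_lamapf} and \ref{lemma:lamapf_to_3sat}, so the theorem follows.
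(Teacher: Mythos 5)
Your proposal is correct and follows essentially the same route as the paper: count the vertices and agents to confirm the reduction is polynomial in size, then invoke Lemma~\ref{lemma:3sat_to_lamapf} and Lemma~\ref{lemma:lamapf_to_3sat} for the two directions of the equivalence and conclude from the NP-completeness of 3-SAT. Your additional remark about the bit-length of the coordinates (which grow like $\Theta(m^2 n)$ and so have $O(\log(nm))$ bits) is a point the paper's proof glosses over, but it only strengthens the same argument rather than changing it.
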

\begin{proof}
The number of vertices in the graph that was constructed is:
$$n \cdot 3 + m \cdot 8 + (2n + 1) = 5n + 8m + 1 \in \text{Poly}(n, m)$$
The number of agents involved in the reduction is:
$$n + m + (n + 1) = 2n + m + 1 \in \text{Poly}(n, m)$$
Therefore, the size of the reduction depends on the size of the original 3-SAT formula in a polynomial way. Lemma~\ref{lemma:3sat_to_lamapf} states that if the 3-SAT formula has a solution, then the reduction also has the solution. Lemma~\ref{lemma:lamapf_to_3sat} states that if the reduction has a solution, then the initial 3-SAT formula has a solution as well. This means that we have constructed a polynomial reduction from 3-SAT to LA-MAPF. Since 3-SAT is an NP-complete problem, this proofs that LA-MAPF is NP-hard.
\end{proof}


\section{Discussion}
\label{section:discussion}

In this work we have assumed that the agents are confined to a graph embedded in $\mathbb{R}^2$ and each agent is modeled as a disk. Generally, the dimensionality of the workspace may be larger and the sizes and shapes of the agents may vary. Still, by showing that at least one variant of LA-MAPF is NP-Hard, i.e. the one considered in the paper, one can infer that the generic formulation is NP-Hard as well.  


Another assumption that we have adopted is that only one agent can move at a time. Meanwhile, synchronous simultaneous moves of several agents may be allowed in MAPF as long as they do not result in collision. Our reduction is valid for simultaneous moves assumption as long as a transition from one joint configuration to a consecutive joint configuration can be decomposed into the sequence of atomic, i.e. single-agent, moves (as we are not focused on the cost of the solution but rather on establishing the presence of the solution). It worth to note here, that this is not always the case. For example of a joint simultaneous move that can no be safely decomposed into the series of single-agent moves consider a ``closed-chain'' move, when at a time step a group of agents move in such way that the first agent moves to the vertex that was occupied by the second agent, the second agent moves to the vertex that was occupied by the third agent and so on, while the last agent moves to the vertex occupied by the first one. Estimating LA-MAPF tractability in this case is a possible direction of future work.



\section{Conclusions And Future Work}
\label{section:conclusion}

In this paper, we have investigated the decision version of the LA-MAPF problem, which is a natural extension of the MAPF problem where size of the agents is to be taken into account. So far the computational complexity of this problem had not yet been determined, in contrast to regular MAPF. To this end  we have proposed a reduction from the seminal 3-SAT problem to the LA-MAPF problem in order to demonstrate that the latter is intractable, i.e. NP-hard. This means that (unlike regular MAPF) LA-MAPF can not solved in polynomial time (even on undirected graphs) unless P=NP.
Our research establishes an important theoretical result and justifies the development of new heuristics for existing algorithms and faster exponential-time algorithms for practical applications.


The main reason why LA-MAPF is hard to solve is the need to consider closeness of the agents in the metric space to which the MAPF graph is embebedded. A prosperous direction of future research is to focus on specific graph embeddings and layouts that may make the problem simpler (while still adequately reflecting the constraints of the real-world MAPF applications). Examples of such graphs are planar graphs, where edges can only intersect at vertices, or grids. The proposed reduction method is generic and not tailored to these cases, so there may still exist a polynomial-time algorithm that could solve the problem by taking advantage of a specific graph layout.

Another fruitful direction for extension of this work, as mentioned previously, is considering the setting where several agents are allowed to move simultaneously. 
Finally, we have demonstrated that LA-MAPF is NP-hard, but we still do not know whether it belongs to NP. If one shows that LA-MAPF is in NP, this would also imply that it is NP-complete. One possible way to prove this would be to demonstrate that for any instance of LA-MAPF, a solution of polynomial length can be found.

Overall our study not only establishes an important milestone in MAPF research but opens a range of avenues for future research.



\begin{ack}
\end{ack}



\bibliography{mybibfile}

\end{document}